\documentclass{article}
\usepackage[utf8]{inputenc}
\usepackage{amsmath,amsthm,amssymb,verbatim,xcolor}
\newtheorem{theorem}{Theorem}[section]
\newtheorem{corollary}{Corollary}[section]
\title{Bahadur efficiency of EDF based normality tests when parameters are estimated}
\author{Bojana Milo\v sevi\'c\thanks{University of Belgrade -- Faculty of Mathematics, \texttt{bojana@matf.bg.ac.rs}}, Ya.Yu. Nikitin\thanks{Department of Mathematics and Mechanics, Saint-Petersburg State University,
National Research University - Higher School of Economics, Russia}, Marko Obradovi\'c\thanks{University of Belgrade -- Faculty of Mathematics, \texttt{marcone@matf.bg.ac.rs}}}

\date{}

\begin{document}

\maketitle
\begin{abstract}
    In this paper some well-known tests based on empirical distribution functions (EDF) with estimated parameters for testing composite normality hypothesis are revisited,  and some new results on asymptotic properties are provided.  In particular, the approximate Bahadur slopes are obtained --- in the case of close alternatives --- for the EDF-based tests as well as the likelihood ratio test. The local approximate efficiencies are calculated for several close alternatives. The obtained results could serve as a  benchmark for evaluation of the quality of recent and future normality tests.
\end{abstract}
{keywords: asymptotic efficiency, goodness-of-fit, composite hypothesis}
\section{Introduction}

For testing the goodness-of-fit (GOF) null hypothesis that the sample is taken from a fully specified continuous distribution $F_0$, the predominantly used tests in practice are those based on some distance between the empirical distribution function (EDF) $F_n$ and $F_0$.

The most widely used EDF-based tests is the well-known Kolmogorov-Smirnov test \cite{kolmogorov1933} with statistic $D_n=\sup_{x}|F_n(x)-F_0(x)|$  based on the $L^\infty$ distance. Other popular tests include the Cramer--von Mises \cite{darling1957} and Anderson--Darling \cite{anderson1952} test based on the weighted $L^2$ distance between $F_n$ and $F_0$. Different variations of these test statistics exist. Watson proposed the centered versions of the Kolmogorov-Smirnov \cite{watson1976} (see also \cite{darling1983a,darling1983b}) and the Cramer--von Mises \cite{watson1961} tests. Other variants were proposed by Kuiper \cite{kuiper1960} and Khmaladze \cite{khmaladze1982} among others.

The properties of EDF-based tests are well-known. All these tests are distribution-free under the null hypothesis makes them omnibus GOF tests applicable regardless of $F_0$. Their asymptotic distributions follow from the limiting process of $F_n(t)-F_0(t)$ when $n\to\infty$, which is the Brownian bridge. Large deviations of these statistics are available in \cite{nikitinKnjiga}.

However, more often than not, we would like to test a composite GOF null hypothesis that the sample comes from a family of distributions $F_0(x;\theta)$ indexed by a finite-dimensional parameter $\theta$. In this scenario, we need to estimate $\theta$ in order to apply the EDF-based tests.  The problem is that the tests are no longer distribution-free, and their distribution depends on $F_0$ and $\theta$. 

In case of location-scale families, it can be easily shown that the distribution does not depend on the location and scale parameters, but only on $F_0$. Therefore in this case we can consider GOF tests for particular null location-scale families of distributions such as normal, exponential, logistic, Cauchy, etc.

 The modified EDF-based tests  have been proposed and/or their properties investigated by Durbin \cite{darling1955cramer}, Kac, Kiefer and Wolfowitz \cite{kac1955tests}, Lilliefors \cite{lilliefors1967kolmogorov,lilliefors1969kolmogorov},  Sukhatme \cite{sukhatme1972},  and the asymptotic theory have been studied by Durbin \cite{durbin1973weak} and Stephens \cite{stephens1976asymptotic}, among others.
 
 A popular tool for asymptotic comparison of tests is the Bahadur asymptotic efficiency. One of the advantages over other types of efficiencies is that is more convenient when the asymptotic distributions are not normal. A comprehensive review of the Bahadur efficiencies of EDF-tests for the simple null hypothesis is available in \cite{nikitinKnjiga}.
 
The calculation of Bahadur efficiency is heavily dependent on the large deviation function of the test statistic, which is not available for the statistics with estimated parameters. An approach in this direction was done by  Arcones \cite{arcones2006bahadur}  for the case of Kolmogorov-Smirnov  normality test (also known as Lilliefors normality test), however, only upper
and lower estimates for large deviations were obtained in a very complicated form. The only test for which the Bahadur efficiencies were calculated  is the Kolmogorov-Smirnov exponentiality test \cite{nikitin2007lilliefors}. There the 
 corresponding large deviations were obtained using particular convenient properties of the exponential distribution.
 
 When large deviations are unavailable, a common way out is to use the so-called approximate Bahadur efficiency. Instead of the large deviations, its calculation requires only the tail behaviour of the asymptotic distribution.
 The quality of approximation has been shown to be good locally and for some statistics (e.g. U-statistics \cite{nikitinMetron,nikitin1999large} and their supremum \cite{nikitin2010large,milovsevic2016two}), exact and approximate Bahadur efficiency locally coincide.
 
 In this paper we compare EDF-based tests in terms of approximate Bahadur efficiency when testing the null normality hypothesis with both parameters unknown. In Section 2 we present the test statistics and their asymptotic behaviour and in Section 3 we calculate the efficiencies.
 
 \section{Test statistics}

Consider now the case of testing normality, i.e. the null hypothesis is $H_0:F(x)=\Phi(\frac{x-\mu}{\sigma})$, where $\Phi$ is the standard normal distribution function, and  unknown parameters $\mu$ and $\sigma$ are the mean and standard deviation.

The tests we consider are all based on difference $$\Delta_n(t;\hat{\mu},\hat{\sigma})=F_n(t)-\Phi\Big(\frac{t-\hat{\mu}}{\hat{\sigma}}\Big):$$

\begin{itemize}
    \item the Kolmogorov--Smirnov normality test with statistic
    \begin{align}
        D_n&=\sup_{t\in \mathbb{R}}\bigg|\Delta_n(t;\hat{\mu},\hat{\sigma})\bigg|;
    \end{align}
    \item the Cramer--von Mises normality test
    \begin{align}
        \omega^2_n&=\int_{-\infty}^{\infty}\Delta_n^2(t;\hat{\mu},\hat{\sigma})d\Phi\Big(\frac{t-\hat{\mu}}{\hat{\sigma}}\Big);
    \end{align}
    \item the Anderson--Darling normality test
    \begin{align}
       A^2_n&=\int_{-\infty}^{\infty} \frac{\Delta_n^2(t;\hat{\mu},\hat{\sigma})}{\Phi\Big(\frac{t-\hat{\mu}}{\hat{\sigma}}\Big)\bigg(1-\Phi\Big(\frac{t-\hat{\mu}}{\hat{\sigma}}\Big)\bigg)}d\Phi\Big(\frac{t-\hat{\mu}}{\hat{\sigma}}\Big);
    \end{align}
    \item the Watson--Darling variation of the Kolmogorov--Smirnov normality test
    \begin{align}
       G_n&=\sup_{t\in \mathbb{R}}\bigg|\Delta_n(t;\hat{\mu},\hat{\sigma})-\int_{-\infty}^{\infty}\Delta_n(z;\hat{\mu},\hat{\sigma})d\Phi\Big(\frac{z-\hat{\mu}}{\hat{\sigma}}\Big)dz\bigg|;
    \end{align}
    \item the Watson variation of the Cramer--von Mises normality test
  
   \begin{equation} U^2_n=\int_{-\infty}^{\infty}\bigg(\Delta_n(t;\hat{\mu},\hat{\sigma})-\int_{-\infty}^{\infty}\Delta_n(z;\hat{\mu},\hat{\sigma})d\Phi\Big(\frac{z-\hat{\mu}}{\hat{\sigma}}\Big)dz\bigg)^2 d\Phi\Big(\frac{t-\hat{\mu}}{\hat{\sigma}}\Big)dt,
   \end{equation}
\end{itemize}
where $\hat{\mu}=\bar{X}_n$ and $\hat{\sigma}^2=S^2$ are the maximum likelihood estimators of $\mu$ and $\sigma^2$.
To describe the asymptotic distribution of the test statistics, we define the following empirical processes:

\begin{align*}
    \eta_n(x;\mu,\sigma^2)&=F_n(\mu+\sigma x)-\Phi(x);\\
    \xi_n(x;\mu,\sigma^2)&=F_n(\mu+\sigma x)-\Phi(x)-\int_{-\infty}^{\infty}(F_n(\mu+\sigma x)-\Phi(z))\varphi(z)dz.
\end{align*}

Then, our statistics can be represented as 
\begin{align*}
    D_n&=\sup_{x\in  \mathbb{R}}|\eta_n(x;\hat{\mu},\hat{\sigma}^2)|;\\
    \omega^2_n&=\int_{-\infty}^{\infty} \eta_n^2(x;\hat{\mu},\hat{\sigma}^2)\varphi(x)dx;\\
    A^2_n&=\int_{-\infty}^{\infty} \frac{\eta_n^2(x;\hat{\mu},\hat{\sigma}^2)}{\Phi(x)(1-\Phi(x))}\varphi(x)dx;\\
    G_n&=\sup_{x\in  \mathbb{R}}|\xi_n(x;\hat{\mu},\hat{\sigma}^2)|;\\
U^2_n&=\int_{-\infty}^{\infty} \xi_n^2(x;\hat{\mu},\hat{\sigma}^2)\varphi(x)dx.
\end{align*}
It can be easily shown that all statistics are location and scale free under the null hypothesis of normality. Therefore, in what follows we assume that  true parameters are $\mu_0=0$ and $\sigma_0=1$.

\subsection{Asymptotic behaviour}

\begin{theorem} Let $X_1,X_2,...,X_n$ be an i.i.d. sample from normal $\mathcal{N}(0,1)$. Then the empirical processes $\sqrt{n}\eta_n(x;\hat{\mu},\hat{\sigma}^2)$ and $\sqrt{n}\xi_n(x;\hat{\mu},\hat{\sigma}^2)$ converge weakly in $D(\mathbb{R})$ to  centered Gaussian processes $\eta(x)$  and $\xi(x)$ whose covariance functions are respectively equal to
\begin{align*}
   K_{\eta}(x,y)&= \Phi(\min(x,y))-\Phi(x)\Phi(y)-\varphi(x)\varphi(y)-\frac{1}{2}xy\varphi(x)\varphi(y),\\
    K_{\xi}(x,y)&= \Phi(\min(x,y))-\Phi(x)\Phi(y)+\frac{1}{2}\Phi(x)(1-\Phi(x))+\frac{1}{2}\Phi(y)(1-\Phi(y))\\&+\frac{1}{2\sqrt{\pi}}(\varphi(x)+\varphi(y))-\varphi(x)\varphi(y)-\frac12xy\varphi(x)\varphi(y)+\frac{1}{12}-\frac{1}{4\pi}.
\end{align*}
\end{theorem}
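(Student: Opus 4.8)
My plan is to use the classical device of Durbin for empirical processes with estimated parameters: first linearize $\sqrt n\,\eta_n(\cdot\,;\hat\mu,\hat\sigma^2)$ around the true parameters, then feed in the stochastic expansions of the estimators, and finally read off the limiting Gaussian process together with its covariance. Since by assumption the true law is $\mathcal N(0,1)$, I would split
\[
\sqrt n\,\eta_n(x;\hat\mu,\hat\sigma^2)=\sqrt n\big(F_n(\hat\mu+\hat\sigma x)-\Phi(\hat\mu+\hat\sigma x)\big)+\sqrt n\big(\Phi(\hat\mu+\hat\sigma x)-\Phi(x)\big).
\]
The first summand is the ordinary empirical process $\beta_n:=\sqrt n\,(F_n-\Phi)$ evaluated at the random argument $\hat\mu+\hat\sigma x$. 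Because $\hat\mu\to0$ and $\hat\sigma\to1$ almost surely and $\beta_n$ is asymptotically equicontinuous, this equals $\beta_n(x)+o_P(1)$ uniformly in $x\in\mathbb R$, and $\beta_n$ converges weakly to $B\circ\Phi$ for a standard Brownian bridge $B$. In the second summand I would Taylor-expand $\Phi(\mu+\sigma x)$ in $(\mu,\sigma)$ about $(0,1)$: the linear part is $\varphi(x)\,\hat\mu+x\varphi(x)\,(\hat\sigma-1)$, while the remainder carries bounded factors of the form $x^{k}\varphi(x)$ times $n(\hat\sigma-1)^2=O_P(1)$, so after multiplication by $\sqrt n$ it is $o_P(1)$ uniformly in $x$.

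Next I would insert the linear representations of the maximum likelihood estimators. One has $\sqrt n\,\hat\mu=n^{-1/2}\sum_i X_i$ exactly, and $\sqrt n\,(\hat\sigma^2-1)=n^{-1/2}\sum_i (X_i^2-1)+o_P(1)$ (the cross term $\sqrt n\,\bar X_n^{\,2}$ being $o_P(1)$), whence $\sqrt n\,(\hat\sigma-1)=\tfrac12\,n^{-1/2}\sum_i(X_i^2-1)+o_P(1)$ by the delta method. Writing $\zeta_{1n}=n^{-1/2}\sum_i X_i$ and $\zeta_{2n}=n^{-1/2}\sum_i(X_i^2-1)$, the multivariate central limit theorem gives joint weak convergence of $(\beta_n,\zeta_{1n},\zeta_{2n})$ to a centered Gaussian triple $(B\circ\Phi,Z_1,Z_2)$, and the continuous mapping theorem yields $\sqrt n\,\eta_n(\cdot\,;\hat\mu,\hat\sigma^2)\Rightarrow\eta$ in $D(\mathbb R)$ with $\eta(x)=B(\Phi(x))+\varphi(x)Z_1+\tfrac12\,x\varphi(x)Z_2$. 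For the centered process I would note that $\xi_n(x;\hat\mu,\hat\sigma^2)=\eta_n(x;\hat\mu,\hat\sigma^2)-\int_{\mathbb R}\eta_n(z;\hat\mu,\hat\sigma^2)\,\varphi(z)\,dz$, i.e.\ $\xi_n$ is the image of $\eta_n$ under the bounded linear operator $g\mapsto g-\int_{\mathbb R}g\,\varphi$, so $\sqrt n\,\xi_n\Rightarrow\xi$ with $\xi(x)=\eta(x)-\int_{\mathbb R}\eta(z)\varphi(z)\,dz$.

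The remaining work is the covariance bookkeeping. For $K_\eta$ one needs only $\mathrm{Cov}(B(\Phi(x)),B(\Phi(y)))=\Phi(\min(x,y))-\Phi(x)\Phi(y)$, $\mathrm{Cov}(B(\Phi(x)),Z_1)=\int_{-\infty}^{x}t\varphi(t)\,dt=-\varphi(x)$, $\mathrm{Cov}(B(\Phi(x)),Z_2)=\int_{-\infty}^{x}(t^2-1)\varphi(t)\,dt=-x\varphi(x)$, together with $\mathrm{Var}(Z_1)=1$, $\mathrm{Var}(Z_2)=\mathrm{Var}(X^2)=2$ and $\mathrm{Cov}(Z_1,Z_2)=E X^3=0$; expanding the bilinear form in $\eta(x)$, $\eta(y)$ and collecting terms produces the stated $K_\eta$. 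For $K_\xi$ one uses in addition $\int_{\mathbb R}\varphi^2=\tfrac1{2\sqrt\pi}$, $\int_{\mathbb R}t\,\varphi^2=0$, $\int_{\mathbb R}\Phi\,\varphi=\tfrac12$, $\int_{\mathbb R}\Phi^2\varphi=\tfrac13$ and $\int_{\mathbb R}\Phi(\min(x,t))\varphi(t)\,dt=\Phi(x)-\tfrac12\Phi(x)^2$ to evaluate $\mathrm{Cov}\big(\eta(x),\int_{\mathbb R}\eta\varphi\big)$ and $\mathrm{Var}\big(\int_{\mathbb R}\eta\varphi\big)=\tfrac1{12}-\tfrac1{4\pi}$, and then substitutes these into $K_\xi(x,y)=K_\eta(x,y)-\mathrm{Cov}\big(\eta(x),\int_{\mathbb R}\eta\varphi\big)-\mathrm{Cov}\big(\eta(y),\int_{\mathbb R}\eta\varphi\big)+\mathrm{Var}\big(\int_{\mathbb R}\eta\varphi\big)$.

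The step I expect to be the real obstacle is the uniform passage from $\beta_n(\hat\mu+\hat\sigma x)$ to $\beta_n(x)$: over the full line the scale error $(\hat\sigma-1)x$ does not shrink, so equicontinuity of the empirical process on compact sets is not by itself enough. The standard fix is to time-change to the uniform empirical process $u\mapsto\beta_n(\Phi^{-1}(u))$ on $[0,1]$, observe that $\sup_x|\Phi(\hat\mu+\hat\sigma x)-\Phi(x)|\to0$ thanks to the Gaussian tails, and apply the modulus-of-continuity bound there, using that the limit $B$ is continuous with $B(0)=B(1)=0$ so that neighbourhoods of $\pm\infty$ are asymptotically negligible; this is exactly the technical core of the Durbin--Stephens theory. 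The Taylor expansion, the expansions of $\bar X_n$ and $S^2$, and the covariance computations are then routine.
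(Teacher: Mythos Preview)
Your argument is correct and follows the same linearization strategy as the paper: expand the estimated-parameter process around $(\mu,\sigma^2)=(0,1)$, insert the stochastic expansions of $\hat\mu$ and $\hat\sigma^2$, obtain joint weak convergence of the empirical process together with $\sqrt n\,\hat\mu$ and $\sqrt n\,(\hat\sigma^2-1)$, and read off the covariance. The paper packages the pointwise linearization by citing Randles and then argues finite-dimensional convergence plus tightness; you do the Durbin split $F_n(\hat\mu+\hat\sigma x)-\Phi(\hat\mu+\hat\sigma x)$ plus $\Phi(\hat\mu+\hat\sigma x)-\Phi(x)$ by hand and handle the random-argument piece via asymptotic equicontinuity after a time change to $[0,1]$ --- this is the same argument in slightly different dress. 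The one place your route is genuinely cleaner is $\xi_n$: you observe that $\xi_n$ is the image of $\eta_n$ under the bounded linear map $g\mapsto g-\int g\,\varphi$, so weak convergence of $\sqrt n\,\xi_n$ and the identity $K_\xi(x,y)=K_\eta(x,y)-\int K_\eta(x,\cdot)\varphi-\int K_\eta(\cdot,y)\varphi+\iint K_\eta\,\varphi\varphi$ come for free from the continuous mapping theorem, whereas the paper redoes a full Randles-type expansion for $\xi_n$ separately and recomputes all the cross-covariances. (One caution: if you actually carry your bookkeeping through you will obtain $-\tfrac12\Phi(x)(1-\Phi(x))-\tfrac12\Phi(y)(1-\Phi(y))$ in $K_\xi$, not the plus signs printed in the statement; the paper's own intermediate display also produces the minus signs, so this is a typo in the final simplification rather than a defect in your method.)
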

\textbf{Proof}. For a fixed $x$, from \cite{randles1982asymptotic} we have the following representation:
\begin{align*}
    \sqrt{n}\eta_n(x;\hat{\mu},\hat{\sigma}^2)&=\sqrt{n}\eta_n(x;0,1)+\sqrt{n}\hat{\mu}\cdot\frac{\partial}{\partial \mu}{\mathbf{E}}\Big[{\rm I}\{X_1<\mu+\sigma x\}-\Phi(x)\Big]\Big|_{\mu=0,\sigma^2=1}\\&+\sqrt{n}(\hat{\sigma}^2-1)\cdot\frac{\partial}{\partial \sigma^2}{\mathbf{E}}\Big[{\rm I}\{X_1<\mu+\sigma x\}-\Phi(x)\Big]\Big|_{\mu=0,\sigma^2=1}\!+ o_P(1)\\&=\sqrt{n}\eta_n(x;0,1)+\varphi(x)\cdot\sqrt{n}\hat{\mu}+\frac x{2}\varphi(x)\cdot\sqrt{n}(\hat{\sigma}^2-1)+ o_P(1).
\end{align*}

From the multivariate central limit theorem it is straightforward to show  that the finite dimensional distributions are asymptotically normal.

The tightness of this process follows from the tightness property of the first summand (see \cite[Chapter 3]{billingsley}). The remaining components are just deterministic continuous functions of $x$ multiplied by a random variable, and, as such, tight in $C(\mathbb{R})$.

Taking into account the Bahadur represention of the estimator for $\sigma^2$, $$\hat{\sigma}^2-1=\frac{1}{n^2}\sum_{i,j}\frac{(X_i-X_j)^2}{2}-1=\frac{2}{n}\sum_{i}\frac{X_i^2-1}{2}+o_p(1),$$ we obtain that
the covariance function is

\begin{align*}
    K_\eta(x,y)&=K_0(x,y)+\varphi(y){\mathbf{E}}\Big[{\rm I}\{X< x\}X\Big] +\frac{y\varphi(y)}{2}{ \mathbf{E}}\Big[{\rm I}\{X< x\}(X^2-1)\Big]\\&+ \varphi(x){ \mathbf{E}}[{\rm I}\{X< y\}X]+ \frac{x\varphi(x)}{2}{\mathbf{E}}\Big[{\rm I}\{X< y\}(X^2-1)\Big] + \varphi(x)\varphi(y) \\&+ \frac{xy\varphi(x)\varphi(y)}{4}{ \mathbf{E}}\Big((X^2-1)^2\Big)
    \\&=K_0(x,y)-\varphi(y)\cdot\varphi(x) -\frac{y\varphi(y)}{2}\cdot x\varphi(x) -\varphi(x)\cdot\varphi(y) \\&-\frac{x\varphi(x)}{2}\cdot y\varphi(y) + \varphi(x)\varphi(y) + \frac{xy\varphi(x)\varphi(y)}{4}\cdot 2\\&
    =K_0(x,y)-\varphi(x)\varphi(y)-\frac{1}{2}xy\varphi(x)\varphi(y),
\end{align*}
where
\begin{align*}
   K_0(x,y)=\Phi(\min(x,y))-\Phi(x)\Phi(y)
\end{align*}
is the covariance function of the limiting process $\{\eta(x;0,1)\}$.

The same arguments for convergence of $\eta_n$ hold for the empirical process $\xi_n$, too, following its representation as
\begin{align*}
    \sqrt{n}\xi_n&(x;\hat{\mu},\hat{\sigma}^2)=\sqrt{n}\eta_n(x;0,1)+\frac12-\frac1n\sum_{i=1}^n\Phi(-X_i)+\sqrt{n}\hat{\mu}\\&\cdot\frac{\partial}{\partial \mu}{ \mathbf{E}}\Big[{\rm I}\{X_1<\mu+\sigma x\}-\Phi(x)-\Phi\Big(\frac{\mu-X_i}{\sigma}\Big)\Big]\Big|_{\mu=0,\sigma^2=1}+\sqrt{n}(\hat{\sigma}^2-1)\\&\cdot\frac{\partial}{\partial \sigma^2}{ \mathbf{E}}\Big[{\rm I}\{X_1<\mu+\sigma x\}-\Phi(x)-\Phi\Big(\frac{\mu-X_i}{\sigma}\Big)\Big]\Big|_{\mu=0,\sigma^2=1}+ o_P(1)\\&=\sqrt{n}\eta_n(x;0,1)+\frac1n\sum_{i=1}^n\Phi(X_i)-\frac12+(\varphi(x)-\frac{1}{2\sqrt{\pi}})\cdot\sqrt{n}\hat{\mu}\\&+\frac x{2}\varphi(x)\cdot\sqrt{n}(\hat{\sigma}^2-1)\!+\! o_P(1),
\end{align*}
while its covariance function is

\begin{align*}
K_{\xi}(x,y)&=K_0(x,y)+\frac{1}{2\sqrt{\pi}} (\phi(x)+\phi(y)-\frac{1}{\sqrt{\pi}})-\phi(x) (\phi(y)-\frac{1}{2\sqrt{\pi}})\\&-\phi(y) (\phi(x)-\frac{1}{2\sqrt{\pi}})+(\phi(x)-\frac{1}{2\sqrt{\pi}}) (\phi(y)-\frac{1}{2\sqrt{\pi}})-\frac{1}{2} y \phi(y) x\phi(x)\\&-\frac{1}{2} x \phi(x) y\phi(y)+\frac{1}{2} x y \phi(x) \phi(y)+\frac{\Phi(x)}{2}+\frac{\Phi(y)}{2}+\frac{1}{12}\\&-\frac{1}{2}(1-(1-\Phi(x))^2)-\frac{1}{2}(1-(1-\Phi(y))^2)\\
&=K_0(x,y)+\frac{1}{2}\Phi(x)(1-\Phi(x))+\frac{1}{2}\Phi(y)(1-\Phi(y))\\&+\frac{1}{2\sqrt{\pi}}(\varphi(x)+\varphi(y))-\varphi(x)\varphi(y)-\frac12xy\varphi(x)\varphi(y)+\frac{1}{12}-\frac{1}{4\pi}.
\end{align*}
\hfill$\Box$

The limiting distributions of EDF based test statistics are given in the following corollary.
\begin{corollary}\label{posledica}
Let $X_1,X_2,...,X_n$ be an i.i.d. sample from normal $\mathcal{N}(0,1)$. Then  we have that 

\begin{align*}
    \sqrt{n}D_n \overset{d}{\to}\sup_{t\in \mathbb{R}}|\eta(t)|;\\
    n\omega^2_n\overset{d}{\to}\sum_{i=1}^{\infty}\lambda_iZ_i^2;\\
    nA^2_n\overset{d}{\to}\sum_{i=1}^{\infty}\nu_iZ_i^2;\\
    \sqrt{n}G_n \overset{d}{\to}\sup_{t\in \mathbb{R}}|\xi(t)|;\\
    nU^2_n\overset{d}{\to}\sum_{i=1}^{\infty}\zeta_iZ_i^2
\end{align*}
where $Z_i$ are i.i.d. standard normal random variables, and $\{\lambda_i\}$, $\{\nu_i\}$ and $\{\zeta_i\}$ are sequences of eigenvalues of integral operators $\mathcal{W}$, $\mathcal{A}$ and $\mathcal{U}$ defined by
\begin{align}\label{operatorW}
    \mathcal{W}q(x)=\int_{-\infty}^{\infty}K_\eta(x,y)q(y)\varphi(y)dy,
\end{align}

\begin{align}\label{operatorA}
    \mathcal{A}q(x)=\int_{-\infty}^{\infty}\frac{K_\eta(x,y)}{\sqrt{\Phi(x)(1-\Phi(x))\Phi(y)(1-\Phi(y))}}q(y)\varphi(y)dy,
\end{align} and
\begin{align}\label{operatorU}
    \mathcal{U}q(x)=\int_{-\infty}^{\infty}K_\xi(x,y)q(y)\varphi(y)dy,
\end{align}
respectively.
\end{corollary}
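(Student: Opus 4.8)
The plan is to combine the weak convergence of the empirical processes established in the previous theorem with the continuous mapping theorem and, for the quadratic statistics, with the spectral (Karhunen--Loève) decomposition of the limiting Gaussian processes.

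First I would dispose of the supremum statistics $D_n$ and $G_n$. Applying the change of variable $u=\Phi(x)$, the processes $\sqrt n\,\eta_n(\,\cdot\,;\hat\mu,\hat\sigma^2)$ and $\sqrt n\,\xi_n(\,\cdot\,;\hat\mu,\hat\sigma^2)$ become random elements of $D[0,1]$ converging weakly to $\eta$ and $\xi$; the limits have a.s.\ continuous sample paths on $[0,1]$ and vanish at the endpoints, since the Brownian bridge term together with all the correction terms ($\varphi(x)$, $x\varphi(x)$, $\Phi(x)(1-\Phi(x))$, $\varphi(x)-\tfrac{1}{2\sqrt\pi}$) tend to $0$ as $x\to\pm\infty$. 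The functional $h\mapsto\sup|h|$ is continuous on $D[0,1]$ at every point of $C[0,1]$, so the continuous mapping theorem yields $\sqrt n\,D_n=\sup_x|\sqrt n\,\eta_n(x;\hat\mu,\hat\sigma^2)|\overset{d}{\to}\sup_x|\eta(x)|$ and, identically, $\sqrt n\,G_n\overset{d}{\to}\sup_x|\xi(x)|$.

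For $\omega_n^2$, $A_n^2$ and $U_n^2$ I would apply the continuous mapping theorem to the quadratic functionals $h\mapsto\int_{-\infty}^\infty h^2(x)\varphi(x)\,dx$ and $h\mapsto\int_{-\infty}^\infty h^2(x)\varphi(x)/(\Phi(x)(1-\Phi(x)))\,dx$, whose continuity on the relevant subspace of $D(\mathbb R)$ is obtained by a truncation argument: the integral over $\mathbb R$ is approximated by the integral over $[-M,M]$, the tail contributions being controlled by the tightness of $\{\sqrt n\,\eta_n\}$ and the decay of $\operatorname{Var}\eta(x)=K_\eta(x,x)$ as $x\to\pm\infty$. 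This gives $n\omega_n^2\overset{d}{\to}\int_{-\infty}^\infty\eta^2(x)\varphi(x)\,dx$, $nA_n^2\overset{d}{\to}\int_{-\infty}^\infty\frac{\eta^2(x)\varphi(x)}{\Phi(x)(1-\Phi(x))}\,dx$ and $nU_n^2\overset{d}{\to}\int_{-\infty}^\infty\xi^2(x)\varphi(x)\,dx$. It then remains to identify these limiting quantities with the stated series. Regarding $\eta$ as a centered Gaussian random element of $L^2(\mathbb R,\varphi(x)dx)$, its covariance operator is exactly $\mathcal W$ of \eqref{operatorW}, which is self-adjoint, nonnegative and trace class; by the spectral theorem it admits an orthonormal eigenbasis $\{e_i\}$ with eigenvalues $\lambda_i\ge0$, $\sum_i\lambda_i<\infty$, and the Karhunen--Loève expansion $\eta=\sum_i\sqrt{\lambda_i}\,Z_i\,e_i$ with i.i.d.\ $Z_i\sim\mathcal N(0,1)$ gives $\int_{-\infty}^\infty\eta^2(x)\varphi(x)\,dx=\sum_i\lambda_iZ_i^2$. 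Setting $g(x)=\eta(x)/\sqrt{\Phi(x)(1-\Phi(x))}$, the covariance kernel of $g$ in $L^2(\varphi)$ is precisely the symmetrized kernel appearing in \eqref{operatorA}, so the same argument applied to $\mathcal A$ gives $\int_{-\infty}^\infty\frac{\eta^2(x)\varphi(x)}{\Phi(x)(1-\Phi(x))}\,dx=\sum_i\nu_iZ_i^2$; the claim for $U_n^2$ follows verbatim with $\xi$ and $\mathcal U$ in place of $\eta$ and $\mathcal W$.

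The only step that is more than bookkeeping is the Anderson--Darling case: one must verify that the weight $1/(\Phi(x)(1-\Phi(x)))$ is matched to the tails of the limit, i.e.\ that $\int_{-\infty}^\infty\frac{K_\eta(x,x)\varphi(x)}{\Phi(x)(1-\Phi(x))}\,dx<\infty$, so that $\mathcal A$ is indeed trace class, the limiting integral is a.s.\ finite, and the truncation argument above is valid. This reduces to the tail asymptotics $K_\eta(x,x)=\Phi(x)-\Phi(x)^2-\varphi(x)^2-\tfrac12 x^2\varphi(x)^2\sim 1-\Phi(x)$ as $x\to+\infty$ (and symmetrically as $x\to-\infty$), which is exactly the order annihilated by the Anderson--Darling weight. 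I expect this to be the main technical point, everything else being a direct consequence of the convergence proved in the previous theorem.
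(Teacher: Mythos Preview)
Your proposal is correct and follows essentially the same route as the paper: the continuous mapping theorem for $D_n$ and $G_n$, and the continuous mapping theorem together with Mercer's theorem and the Karhunen--Lo\`eve expansion for $\omega_n^2$, $A_n^2$ and $U_n^2$. You supply considerably more technical detail than the paper does (in particular the trace-class verification for the Anderson--Darling weight), but the underlying argument is the same.
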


For statistics $D_n$ and $G_n$ the convergence holds from the continuous mapping theorem, while for statistics $\omega^2_n$, $A_n$ and $U_n$ the proof follows from continuous mapping theorem,   Mercer's theorem  and  Karhunen-Loeve  decomposition of Gaussian process (see e.g. \cite{henze1997new}).

\section{Approximate Bahadur efficiency}

Let  $\mathcal{G}=\{G(x;\theta)\}$ be the family of distribution functions (DF's)  with densities $g(x;\theta)$, such that $G(x;\theta)$ is normal only for $\theta=0$.
We assume that the DF's from the class $\mathcal{G}$ satisfy the regularity conditions from \cite[Assumptions WD]{nikitinMetron}. 

Suppose that   $T_n=T_n(X_1,...,X_n)$ is a sequence of   test statistics where  the null hypothesis $H_0:\theta\in\Theta_0$ 
is rejected for $T_n>t_n$. Let the sequence of  DF's of the test statistic $T_n$ converge in distribution 
to a non-degenerate DF $F$. Additionally, suppose that

$$\log(1-F(t))=-\frac{a_Tt^2}{2}(1+o(1)),\;\;t\to \infty,$$
 and the limit in probability under the alternative
$$
\lim_{n\rightarrow\infty}T_n/\sqrt{n}=b_T(\theta)>0
$$ exists  for $\theta \in \Theta_1$.

The approximate relative  Bahadur efficiency with respect to another test statistic $V_n=V_n(X_1,...,X_n)$ is defined as
\begin{equation*}
e^{\ast}_{T,V}(\theta)=\frac{c^{\ast}_T (\theta)}{c^{\ast}_V (\theta)},
\end{equation*}
where
\begin{equation}\label{BASlope}
c^{\ast}_T(\theta)=a_Tb_T^2(\theta)
\end{equation} is the Bahadur  approximate slope of $T_n$. This is a  measure of the test efficiency proposed by Bahadur in \cite{bahadur1960}.

When studying asymptotic efficiency it is of interest to see the performance of tests for alternatives close to the null distribution.  For such alternatives we define the local approximate Bahadur efficiency by
\begin{align}
    e^{\ast}_{T,V}=\lim_{\theta\to0 }\frac{c^{\ast}_T (\theta)}{c^{\ast}_V (\theta)}.
\end{align}
The local approximate efficiency often coincides with the exact one.

Here we calculate the approximate relative Bahadur efficiency against some common close alternatives with respect to the likelihood ratio test (LRT).
The LRT has proven to be the optimal test in terms of the exact Bahadur efficiency, and is frequently used as a benchmark for comparison.

\subsection{Local Bahadur slope of the LRT for normality}

In \cite{bahadur1967} it was shown that the local exact Bahadur slope of LR test is equal to $2K(\theta)$ where $K(\theta)$ is the Kullback-Leibler distance from the alternative distribution indexed by $\theta$ to the family of null distributions. 
In the case of the null normality hypothesis it is equal to
 \begin{align}\label{KLopt}
     \nonumber K(\theta)&=\inf_{\mu, \sigma}{ \boldsymbol E}_{\theta}\log\frac{g(X,\theta)}{\frac{1}{\sigma}\varphi(\frac{X-\mu}{\sigma})}\\&=\inf_{\mu, \sigma}\int_{-\infty}^{\infty}
     \log\frac{g(x,\theta)}{\frac{1}{\sigma}\varphi(\frac{x-\mu}{\sigma})}g(x;\theta)dx,
 \end{align}
where $\varphi(x)$ is the standard normal density. 
In the case of close alternatives $g(x;\theta)$ its behaviour is given in the following theorem.

\begin{theorem}
For a given density $g(x;\theta)$ from $\mathcal{G}$ it holds
\begin{align}\label{Kulbak}
\begin{aligned}
    2K(\theta)&=\Bigg(\int_{-\infty}^{\infty}\frac{(g'_{\theta}(x;0))^2}{g(x;0)}dx-\frac{1}{\sigma_0^2}\Big(\int_{-\infty}^{\infty}xg'_{\theta}(x;0)dx\Big)^2\\&-\frac{1}{2\sigma_0^4}\Big(\int_{-\infty}^{\infty}(x-\mu_0)^2g'_{\theta}(x;0)dx\Big)^2\Bigg)\cdot\theta^2 + o(\theta^2),
    \end{aligned}
\end{align}
where $\mu_0$ and $\sigma^2_0$ are parameters of normal distribution $g(x;0)$.
\end{theorem}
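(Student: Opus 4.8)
The plan is to write $K(\theta)$ as the infimum over the nuisance parameters of the Kullback--Leibler divergence
\[
\Psi(\theta,\mu,\sigma^2)=\int_{-\infty}^{\infty}\log\frac{g(x;\theta)}{\varphi_{\mu,\sigma^2}(x)}\,g(x;\theta)\,dx,\qquad \varphi_{\mu,\sigma^2}(x)=\tfrac1\sigma\varphi\!\left(\tfrac{x-\mu}{\sigma}\right),
\]
and to perform a joint second-order Taylor expansion of $\Psi$ about $(\theta,\mu,\sigma^2)=(0,\mu_0,\sigma_0^2)$, where $\Psi=0$ since $g(\cdot;0)=\varphi_{\mu_0,\sigma_0^2}$. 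Writing $h(x)=g'_\theta(x;0)$, so that $\int h(x)\,dx=0$, I would first record the features that collapse the expansion: (i) $\partial_\theta\Psi=0$ at the base point, because it equals $\int(\log(g(x;0)/g(x;0))+1)h(x)\,dx=\int h=0$; (ii) $\nabla_{(\mu,\sigma^2)}\Psi=0$ there, because $(\mu_0,\sigma_0^2)$ minimises $\Psi(0,\cdot,\cdot)=D(g(\cdot;0)\,\Vert\,\varphi_{\mu,\sigma^2})$; and (iii) the $(\mu,\sigma^2)$-Hessian of $\Psi$ at the base point equals the Fisher information matrix of the normal family, which is diagonal, $H=\operatorname{diag}(\sigma_0^{-2},\tfrac12\sigma_0^{-4})$, the off-diagonal entry vanishing because mean and variance are orthogonal parameters for the normal law.

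Since $H$ is positive definite and $\Psi$ is smooth in $\theta$, the implicit function theorem applied to $\nabla_{(\mu,\sigma^2)}\Psi=0$ shows that the minimiser $(\widehat\mu(\theta),\widehat\sigma^2(\theta))$ exists for small $\theta$ and satisfies $\widehat\mu(\theta)-\mu_0=O(\theta)$, $\widehat\sigma^2(\theta)-\sigma_0^2=O(\theta)$. Substituting $\mu-\mu_0=\theta a$, $\sigma^2-\sigma_0^2=\theta b$ and using (i)--(iii), the expansion becomes
\[
\Psi=\tfrac12\theta^2\Bigl(\Psi_{\theta\theta}+2a\,\Psi_{\theta\mu}+2b\,\Psi_{\theta\sigma^2}+(a,b)\,H\,(a,b)^{\top}\Bigr)+o(\theta^2),
\]
uniformly for $(a,b)$ in a fixed neighbourhood of the optimum, so that minimising in $(a,b)$ gives
\[
2K(\theta)=\theta^2\Bigl(\Psi_{\theta\theta}-(\Psi_{\theta\mu},\Psi_{\theta\sigma^2})\,H^{-1}(\Psi_{\theta\mu},\Psi_{\theta\sigma^2})^{\top}\Bigr)+o(\theta^2).
\]
It then remains to compute the mixed derivatives by differentiating under the integral sign: $\Psi_{\theta\theta}=\int (g'_\theta(x;0))^2/g(x;0)\,dx$ (the Fisher information along the path, using $\int g''_\theta(x;0)\,dx=0$), $\Psi_{\theta\mu}=-\sigma_0^{-2}\int x\,g'_\theta(x;0)\,dx$, and $\Psi_{\theta\sigma^2}=-\tfrac12\sigma_0^{-4}\int (x-\mu_0)^2 g'_\theta(x;0)\,dx$, the last two again using $\int h=0$. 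As $H^{-1}=\operatorname{diag}(\sigma_0^2,2\sigma_0^4)$, substituting reproduces exactly the three terms of \eqref{Kulbak}. Conceptually, the leading coefficient is the squared $L^2(g(\cdot;0))$-norm of the projection of the alternative score $g'_\theta(\cdot;0)/g(\cdot;0)$ onto the orthogonal complement of the span of the normal tangent directions $(x-\mu_0)/\sigma_0^2$ and $((x-\mu_0)^2-\sigma_0^2)/(2\sigma_0^4)$, the two being orthogonal by the symmetry of $g(\cdot;0)$ about $\mu_0$.

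The main obstacle is the analytic bookkeeping rather than the algebra: one must justify that the defining infimum is attained within $O(\theta)$ of $(\mu_0,\sigma_0^2)$, that the differentiations under the integral sign are legitimate, and --- most importantly --- that the Taylor remainder is $o(\theta^2)$ uniformly over a fixed neighbourhood of the minimiser in $(a,b)$, so that ``expand, then minimise'' may be interchanged with ``minimise, then expand''. These are precisely the points for which the regularity hypotheses (Assumptions WD of \cite{nikitinMetron}) are invoked, supplying integrable dominating functions for $g(x;\theta)$, $g'_\theta(x;\theta)$ and $g''_\theta(x;\theta)$ near $\theta=0$ and keeping the Hessian bounded away from degeneracy.
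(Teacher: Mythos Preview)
Your argument is correct and reaches the same formula, but the paper takes a shorter, more concrete route. Rather than Taylor-expanding the bivariate function $\Psi(\theta,\mu,\sigma^2)$ and then optimising the quadratic form, the paper first observes that for the normal family the infimum in \eqref{KLopt} is attained at the \emph{explicit} values $\mu(\theta)=\int x\,g(x;\theta)\,dx$ and $\sigma^2(\theta)=\int (x-\mu(\theta))^2 g(x;\theta)\,dx$ (i.e.\ matching the first two moments), so that $K(\theta)=\Psi(\theta,\mu(\theta),\sigma^2(\theta))$ is a smooth function of the single variable $\theta$. One then computes $\mu'(0)$, $\mu''(0)$, $(\sigma^2)'(0)$, $(\sigma^2)''(0)$ directly, differentiates $K$ twice at $0$, and reads off the Maclaurin expansion. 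This sidesteps both the implicit function theorem and the ``expand, then minimise'' interchange that you rightly flag as the delicate step.

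What each approach buys: the paper's exploits the closed-form minimiser specific to the normal location--scale family, giving a short computation but one tied to this particular null model; your profile-Hessian argument is the general mechanism (valid for any smooth nuisance family with nondegenerate Fisher information) and yields the pleasant geometric interpretation of the leading coefficient as the squared norm of the projection of the alternative score onto the orthogonal complement of the normal tangent space. Both are sound; the paper's is simply the more economical choice here.
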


\textbf{Proof}. The infimum in \eqref{KLopt} is reached for \begin{align}
    \mu(\theta)&=\int_{-\infty}^{\infty} x g(x;\theta)dx\\
    \sigma^2(\theta)&=\int_{-\infty}^{\infty} (x-\mu(\theta))^2 g(x;\theta)dx.
\end{align}

It is straightforward that $\mu(0)=\mu_0$, $\sigma^2(0)=\sigma^2_0$, as well as
\begin{align}\label{ex1}
    \mu'(0)&=\int_{-\infty}^{\infty} x g'_{\theta}(x;0)dx\\\label{ex2}
    \mu''(0)&=\int_{-\infty}^{\infty} x g''_{\theta}(x;0)dx\\\label{ex3}
    (\sigma^2)'(0)&=\int_{-\infty}^{\infty} (x-\mu_0)^2 g'_{\theta}(x;0)dx\\\label{ex4}
    (\sigma^2)''(0)&=-2\Big(\int_{-\infty}^{\infty} x g'_{\theta}(x;0)dx\Big)^2 + \int_{-\infty}^{\infty} (x-\mu_0)^2 g''_{\theta}(x;0)dx.
\end{align}

Differentiating $K(\theta)$ along $\theta$ with the help of expressions \eqref{ex1}-\eqref{ex4} we obtain that $K'(0)=0$ and $K''(0)$ equal to the right hand side of \eqref{Kulbak}. Expanding $K(\theta)$ in the Maclaurin series we complete the proof. \hfill$\Box$

The alternatives from $\mathcal{G}$ satisfy the conditions from \cite{rublik1989optimality} and hence the local approximate slope of LRT also has representations \eqref{Kulbak}.
\subsection{Local Bahadur slopes of the EDF based tests}

\begin{theorem}
For the statistics $D_n$, $\omega^2_n$, $A^2_n$, $G_n$ and $U_n^2$, and alternative density $g(x,\theta)\in \mathcal{G}$, the Bahadur approximate slopes are 

\begin{align*}
c_D(\theta)&=\frac{1}{\sup_x K_{\eta}(x,x)}\Big(\sup_{x}\big|g^\star(x)\big|\Big))^2\cdot\theta^2+o(\theta^2);\\
c_{\omega^2}(\theta)&=\frac{1}{\lambda_1}\int_{-\infty}^{\infty}\big(g^\star(x)\big)^2\varphi(x)dx\cdot\theta^2+o(\theta^2);\\
c_{A^2}(\theta)&=\frac{1}{\nu_1}\int_{-\infty}^{\infty}\frac{\big(g^\star(x)\big)^2}{\Phi(x)(1-\Phi(x))}\varphi(x)dx\cdot\theta^2+o(\theta^2);\\
c_{G}(\theta)&=\sup_{x\in\mathbb{R}}\Big|g^\star(x)-\int_{-\infty}^{\infty}\big(g^\star(u)\big)\varphi(u)du\Big|+o(\theta^2);\\
c_{U^2}(\theta)&=\int_{-\infty}^{\infty}\Big(g^\star(x)-\int_{-\infty}^{\infty}\big(g^\star(u)\big)\varphi(u)du\Big)^2\varphi(x)dx+o(\theta^2).
\end{align*}
respectively, where $\lambda_1$, $\nu_1$ and $\zeta_1$
are largest eigenvalues of operators $\mathcal{W}$,
$\mathcal{A}$ and $\mathcal{U}$ defined in {\eqref{operatorW}-\eqref{operatorU}}, and $$g^\star(x)=G'_{\theta}(x;0)+g(x;0)(\mu'(0)+x\sigma'(0)).$$
\end{theorem}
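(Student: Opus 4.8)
Since the approximate Bahadur slope is $c^\ast_T(\theta)=a_Tb_T^2(\theta)$, the plan is to compute the two factors separately: the tail constant $a_T$ from the limiting null laws of Corollary~\ref{posledica}, and the in-probability limit $b_T(\theta)=\lim_{n}T_n/\sqrt n$ under a fixed alternative $g(\cdot;\theta)\in\mathcal G$, after which I would expand the latter as $\theta\to0$.

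\emph{Step 1 (tail constants).} For $\sqrt nD_n$ and $\sqrt nG_n$, which converge to the suprema $\sup_x|\eta(x)|$, $\sup_x|\xi(x)|$ of a.s.\ bounded sample-continuous centered Gaussian processes, I would invoke the classical tail asymptotics for the maximum of such a process (see \cite{nikitinKnjiga}): $\log P(\sup_x|\eta(x)|>t)=-\tfrac{t^2}{2\sup_xK_\eta(x,x)}(1+o(1))$, using that $K_\eta(x,x)\to0$ and $K_\xi(x,x)\to0$ as $|x|\to\infty$ so these suprema are finite; this gives $a_D=1/\sup_xK_\eta(x,x)$ and $a_G=1/\sup_xK_\xi(x,x)$. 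For the quadratic statistics I would work with $\sqrt{n\omega_n^2}$, $\sqrt{nA_n^2}$, $\sqrt{nU_n^2}$ so that $T_n/\sqrt n$ has a positive limit; each $T_n$ converges to $\big(\sum_i\lambda_iZ_i^2\big)^{1/2}$ (resp.\ with $\nu_i,\zeta_i$), and since the logarithmic tail of a weighted sum of independent $\chi^2_1$ variables is governed by the largest weight, $\log(1-F(t))=-\tfrac{t^2}{2\lambda_1}(1+o(1))$, giving $a_{\omega^2}=1/\lambda_1$, $a_{A^2}=1/\nu_1$, $a_{U^2}=1/\zeta_1$.

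\emph{Step 2 (limits under the alternative).} By Glivenko--Cantelli, $F_n\to G(\cdot;\theta)$ uniformly a.s., and by the SLLN $\hat\mu\to\mu(\theta)=\int xg(x;\theta)\,dx$ and $\hat\sigma^2\to\sigma^2(\theta)=\int(x-\mu(\theta))^2g(x;\theta)\,dx$, hence $\eta_n(x;\hat\mu,\hat\sigma^2)\to\bar\eta_\theta(x):=G(\mu(\theta)+\sigma(\theta)x;\theta)-\Phi(x)$ uniformly in $x$ and $\xi_n(x;\hat\mu,\hat\sigma^2)\to\bar\xi_\theta(x):=\bar\eta_\theta(x)-\int\bar\eta_\theta(z)\varphi(z)\,dz$. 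The continuous mapping theorem applied to the relevant functionals then yields $b_D(\theta)=\sup_x|\bar\eta_\theta(x)|$, $b^2_{\omega^2}(\theta)=\int\bar\eta_\theta^2\varphi$, $b^2_{A^2}(\theta)=\int\bar\eta_\theta^2\varphi/(\Phi(1-\Phi))$, $b_G(\theta)=\sup_x|\bar\xi_\theta(x)|$, $b^2_{U^2}(\theta)=\int\bar\xi_\theta^2\varphi$.

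\emph{Step 3 (expansion and the obstacle).} Since $G(x;0)=\Phi(x)$, $\mu(0)=\mu_0=0$, $\sigma(0)=\sigma_0=1$, we have $\bar\eta_0\equiv0$, and differentiating in $\theta$ at $0$ --- justified by Assumptions~WD of \cite{nikitinMetron}, which also give differentiability of $\theta\mapsto\mu(\theta),\sigma(\theta)$ --- yields $\partial_\theta\bar\eta_\theta(x)|_{\theta=0}=G'_\theta(x;0)+g(x;0)(\mu'(0)+x\sigma'(0))=g^\star(x)$, so $\bar\eta_\theta(x)=\theta g^\star(x)+o(\theta)$ and $\bar\xi_\theta(x)=\theta(g^\star(x)-\int g^\star(u)\varphi(u)\,du)+o(\theta)$. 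Substituting $b_T^2(\theta)=\theta^2(\cdots)+o(\theta^2)$ into $c^\ast_T(\theta)=a_Tb_T^2(\theta)$ delivers the five slopes. The hard part will be the uniformity of this expansion: for $D_n$ and $G_n$ one must show $\sup_x|\bar\eta_\theta(x)-\theta g^\star(x)|=o(\theta)$, and for $A^2_n$ one must control the remainder against the singular weight $\varphi(x)/(\Phi(x)(1-\Phi(x)))$ near $x=\pm\infty$, which forces tail estimates on $g^\star$ and on $\partial^2_\theta\bar\eta_\theta$; these, together with finiteness and attainment of $\sup_xK_\eta(x,x)$ and $\sup_xK_\xi(x,x)$, are exactly what Assumptions~WD are meant to provide.
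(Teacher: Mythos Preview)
Your proposal is correct and follows essentially the same route as the paper: obtain $a_T$ from the Marcus--Shepp/Zolotarev tail asymptotics of the limiting null laws in Corollary~\ref{posledica}, obtain $b_T(\theta)$ by identifying the a.s.\ limit $\bar\eta_\theta(x)=G(\mu(\theta)+\sigma(\theta)x;\theta)-\Phi(x)$ under the alternative and applying the continuous mapping theorem, and then differentiate at $\theta=0$ to extract $g^\star$. The only cosmetic differences are that the paper cites \cite{iverson1989effects} and \cite[Chap.~19]{van2000asymptotic} where you appeal directly to Glivenko--Cantelli, the SLLN and the continuous mapping theorem, and it does not make explicit the uniformity caveats you raise in Step~3.
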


\begin{proof}
For each $x \in \mathbb{R}$, using the law of large numbers for U-statistics with estimated parameters \cite{iverson1989effects}, the limit in probability of $\eta_{n}(x,\hat{\mu},\hat{\sigma}^2)$ is
\begin{align*}
    B(x,\theta)=G(\mu(\theta)+\sigma(\theta)x,\theta)-\Phi(x)=\int_{-\infty}^{\mu(\theta)+\sigma(\theta)x}g(u,\theta)du-\Phi(x).
\end{align*}
Further we have that
\begin{align*}
    B_{\theta}'(x,\theta)=g(\mu(\theta)+\sigma(\theta)x,\theta)(\mu'(\theta)+\sigma'(\theta)x)+\int_{-\infty}^{\mu(\theta)+\sigma(\theta)x}g_{\theta}'(u,\theta)du
\end{align*}
When $\theta=0$ the expression above is equal to
\begin{align*}
  B_{\theta}'(x,0)=g(x,0)(\mu'(0)+\sigma'(0)x)+G'_\theta(x;0).  
\end{align*}
Hence we obtain that
\begin{align*}
    B(x,\theta)=g^\star(x)\cdot\theta+o(\theta),\;\;\theta\to0.
\end{align*}
Following \cite[Chap. 19]{van2000asymptotic}, the limits in $P_{\theta}$ probability of statistics $D_n$, $W_n$ and $A_n$ are then
\begin{align*}
    b_D(\theta)&=\sup_{x}|g^\star(x)|\cdot\theta+o(\theta);\\
    b_{\omega^2}(\theta)&=\int_{-\infty}^{\infty}(g^\star(x))^2\varphi(x)dx\cdot\theta^2+o(\theta^2);\\
    b_{A^2}(\theta)&=\int_{-\infty}^{\infty}\frac{(g^\star(x))^2}{\Phi(x)(1-\Phi(x))}\varphi(x)dx\cdot\theta^2+o(\theta^2).
\end{align*}

Analogously, using the process $\xi_n(x;\hat{\mu},\hat{\sigma}^2)$, we obtain the 
limits in probability of $G_n$ and $U^2_n$ are
\begin{align*}
    b_{G}(\theta)&=\sup_{x}\Big|g^\star(x)-\int_{-\infty}^{\infty}\big(g^\star(u)\big)\varphi(u)du\Big|\cdot\theta+o(\theta);\\
b_{U^2}(\theta)&=\int_{-\infty}^{\infty}\Big(g^\star(x)-\int_{-\infty}^{\infty}\big(g^\star(u)\big)\varphi(u)du\Big)^2\varphi(x)dx\cdot\theta^2+o(\theta^2).
\end{align*}

The tail behaviour of the supremum of a Gaussian process follows from  \cite{marcusShepp}, and the constant $a_T$ from \eqref{BASlope} is equal to the supremum on the diagonal of the covariance function. Therefore we get $a_D=\sup_{t}K_{\eta}(t,t)$ in the case of $D_n$ and $a_G=\sup_{t}K_{\xi}(t,t)$ in the case of $G_n$.

For the integral type statistic $\omega^2_n$, using the result of Zolotarev \cite{zolotarev}, we have that the logarithmic tail behavior of $\widetilde{\omega}^2=\sqrt{n\omega^2_n}$ is
\begin{align*}
\log(1-F_{\widetilde{\omega}^2}(x))=-\frac{x^2}{2\lambda_1}+o(x^2),\;\; x\to \infty,
\end{align*}
and hence, $\widetilde{a}_{\widetilde{\omega}^2}=\frac{1}{\lambda_1}$, where $\lambda_!$ is the largest eigenvalue of the integral operator $\mathcal{W}$ defined in \eqref{operatorW}. Analogously we get $\widetilde{a}_{\widetilde{A}^2}=\frac{1}{\nu_1}$
and $\widetilde{a}_{\widetilde{U}^2}=\frac{1}{\zeta_1}$ for statistics $A^2_n$ and $U^2_n$.
\end{proof}

\subsection{Calculation of efficiencies}

The close alternatives we consider here are

\begin{itemize}
    \item a Lehmann alternative with density
    \begin{equation*}
        g_1(x;\theta)=(1+\theta)\Phi^{\theta}(x)\varphi(x);
    \end{equation*}
    \item a first Ley-Paindaveine alternative with density
    \begin{equation*}
        g_2(x;\theta)=\varphi(x) e^{-\theta(1-\Phi(x))}(1+\theta\Phi(x));
    \end{equation*}
    \item a second Ley-Paindaveine alternative with density
    \begin{equation*}
        g_3(x;\theta)=\varphi(x)(1-\theta\pi\cos(\pi\Phi(x));
    \end{equation*}
    \item a contamination alternative (with $\mathcal{N}(\mu,\sigma^2)$) alternative with density
    \begin{equation*}
        g_4^{[m,\sigma^2]}(x;\theta)=(1-\theta)\varphi(x)+\frac{\theta}{\sigma}\varphi\Big(\frac{x-\mu}{\sigma}\Big).
    \end{equation*}
\end{itemize}

To calculate the efficiency one needs to find the largest eigenvalues $\lambda_1$, $\nu_1$ and $\zeta_1$ from Corollary \ref{posledica}. Since we can not obtain them analytically, we use the approximation method from \cite{bozin} (see also \cite{cuparic2020some}).

The values of efficiencies are presented in Table \ref{tab: eff}. We can see that the integral tests are more efficient than the supremum ones. Among them, the Anderson--Darling test is best one for almost all considered alternatives. Additionally, the Watson-type modifications of Kolmogorov-Smirnov and Cramer--von Mises tests are less efficient than the original versions. 

These results can serve as a benchmark for evaluation of the quality of recent and future normality tests.

\begin{table}
\centering
\caption{Approximate Bahadur efficiency of $D_n$ and $W_n$ with respect to LRT}
\label{tab: eff}
\medskip

\begin{tabular}{c|ccccc}
     alternative &$D_n$&$\omega^2_n$&$A^2_n$&$G_n$&$U^2_n$  \\\hline
   Lehmann&0.311&0.584& 0.689&0.258&0.471\\
   1st Ley-Paindaveine&0.455&0.800& 0.891&0.321&0.699\\
   2nd Ley-Paindaveine&0.565&0.917& 0.971&0.332&0.846\\ 
   Contamination with $\mathcal{N}(1,1)$ &0.200&0.377&0.464&0.111&0.302\\
   Contamination with $\mathcal{N}(0.5,1)$&0.266&0.505&0.606&0.146&0.402\\
   Contamination with $\mathcal{N}(0,0.5)$&0.258&0.570&0.649&0.137&0.668\\
\end{tabular}
\end{table}

\section*{Acknowledgement}
The work of B. Milo\v sevi\'c is supported by the Ministry of education, science and technological development of the Republic of Serbia.

\providecommand{\bysame}{\leavevmode\hbox to3em{\hrulefill}\thinspace}
\providecommand{\MR}{\relax\ifhmode\unskip\space\fi MR }
\providecommand{\MRhref}[2]{%
  \href{http://www.ams.org/mathscinet-getitem?mr=#1}{#2}
}
\providecommand{\href}[2]{#2}

\end{document}